\newtheorem*{remark}{Remark}
\newtheorem{theorem}{Theorem}
\newcommand{\EFF}{\texttt{EFF}}
\newcommand{\OCAML}{\texttt{OCAML}}
\newcommand{\op}[1]{\mathtt{#1}}
\newcommand{\choice}{\op{choice}}
\newcommand{\la}{\lambda}
\newcommand{\key}[1]{\mathbf{#1}~}
\newcommand{\type}[1]{\mathtt{#1}}
\newcommand{\B}{\type{B}}
\newcommand{\Unit}{\type{Unit}}
\newcommand{\R}{\type{Real}}
\newcommand{\Ae}{A_E}
\newcommand{\Oe}{O_E}
\newcommand{\Arl}{A_{\RL}}
\newcommand{\Orl}{O_{\RL}}
\newcommand{\ararrow}{\rightsquigarrow}
\newcommand{\effarrow}[2]{\prescript{#1}{}{\Rightarrow}^{#2}}
\newcommand{\red}{\rightarrow}
\newcommand{\sub}[2]{[ #2 / #1]}
\newcommand{\set}[1]{\{ #1 \}}
\newcommand\midd{\; \mbox{\Large{$\mid$}}\;}
\newenvironment{framed}[0]{\begin{boxedminipage}{\linewidth}}{\end{boxedminipage}}
\newcommand{\RL}{\mathit{RL}}
\newcommand{\Abs}{\mathit{Abs}}
\newcommand{\eff}{\mathit{eff}}
\newcommand{\hide}{\mathit{hide}}
\newcommand{\Act}{\mathit{Act}}
\newcommand{\p}{\vdash}
\newcommand{\AXC}[1]{\AxiomC{#1}}
\newcommand{\UIC}[1]{\UnaryInfC{#1}}
\newcommand{\BIC}[1]{\BinaryInfC{#1}}
\newcommand{\TIC}[1]{\TrinaryInfC{#1}}
\newcommand{\DP}{\DisplayProof} 
\newcommand{\vvspace}{\vspace{3mm}}
\lstdefinestyle{eff}{
	mathescape=true
	breaklines=true,
	basicstyle=\small,
	emph={read,Read,raise,Raise,randomint,Randomint,randomfloat,Randomfloat,write,Write,print,Print,Choice,Do,Reward,Observe,Init,Op,choice,do,reward,observe,init,op}, emphstyle=\bfseries,
	emph={[2]unit,float,int},emphstyle={[2]\color{cyan!80!black}},
	keywordstyle=\color{blue!70!black},
	stringstyle=\color{orange!50!black},
	showstringspaces=false,
	morecomment=[s][\color{green!50!black}]{(*}{*)},
	morestring=[b]",
	morekeywords={begin,end,finally,continue,perform,type,use,rec,effect,handle,with,handler,let,return,fun,while,match,if,then,else,probability,in},
	literate= 
	{->}{{$\mapsto$}}2
	{bot}{{$\bot$}}1
	{odod}{{\textcolor{blue}{do}}}2
}
\title{On Reinforcement Learning, Effect Handlers, and the State Monad}
\author{Ugo Dal Lago \and Francesco Gavazzo \and Alexis Ghyselen}
\date{}
\begin{document}

\maketitle

\begin{abstract}
     We study the algebraic effects and handlers as a way to support 
     decision-making abstractions in functional programs, whereas a user can 
     ask a learning algorithm to resolve choices without implementing the 
     underlying selection mechanism, and give a feedback by way of rewards. 
     Differently from some recently proposed approach to the problem based on 
     the selection monad 
     \cite{DBLP:conf/lics/AbadiP21}, we express the underlying 
     intelligence as a reinforcement learning algorithm implemented as a set of 
     handlers for some of these algebraic operations, including those for 
     choices and rewards. We show how we can in practice use algebraic 
     operations and handlers --- as available in the programming language \EFF\ 
     --- to clearly separate the learning algorithm from its environment, thus 
     allowing for a good level of modularity. We then show how the host 
     language can be taken as a $\la$-calculus with handlers, this way showing
     what the essential linguistic features are. We conclude by hinting at how 
     type and effect systems could ensure safety properties, at the same time 
     pointing at some directions for further work.
\end{abstract}


\section{Introduction}
Machine learning is having, and will likely have more and more, a tremendous  
impact on the way \emph{computational} problems (e.g. classification or 
clustering)  are solved. Learning techniques, however, turn out to be very 
fruitful when solving \emph{control} problems, too. There, in fact, an agent's 
goal is to learn how to maximize its  reward while interacting with the 
environment rather than while computing a  mere function. From this point of 
view, the so-called \emph{reinforcement learning}  techniques 
\cite{Sutton2018:ReinforcementLearning}  are proving to be particularly 
appropriate in many contexts where  exploration and optimization have to be 
interleaved.

Prompted by that, in the past decade there has been an incredible effort to 
develop programming languages and programming language techniques oriented to 
the design of machine learning systems. The outcome of such an effort is 
well-known and gave birth to new  programming language paradigms, such as  
Bayesian~\cite{DBLP:conf/popl/Goodman13,DBLP:journals/corr/abs-1809-10756} and  
differentiable 
programming~\cite{DBLP:journals/pacmpl/AbadiP20,DBLP:journals/toplas/PearlmutterS08,DBLP:journals/lisp/SiskindP08},
as well as to the flourishing field of programming languages for 
inference.\footnote{See, e.g., the dedicated POPL Workshop LAFI 
\url{https://popl21.sigplan.org/home/lafi-2021}.}
Despite the incredible strides made, machine learning 
support from programming languages is still in its infancy
and its deliverables mostly consist of a set of 
general-purpose programming language libraries 
(such as \textsf{Theano}~\cite{Bergstra10theano}, 
\textsf{TensorFlow}~\cite{DBLP:conf/osdi/AbadiBCCDDDGIIK16}, and 
\textsf{Edward} \cite{tran2016edward,tran2017deep}, just to mention but a few) 
and 
domain-specific languages (such as \textsf{Anglican}~\cite{Tolpin/Anglican/2016}, 
\textsf{Pyro}~\cite{DBLP:journals/jmlr/BinghamCJOPKSSH19}, and 
\textsf{Stan}~\cite{JSSv076i01}, just to mention but a few).

In this work, we deal with a further programming language paradigm oriented 
to machine learning: \emph{choice-based 
programming}. The latter moves from the observation that many 
machine learning systems --- especially those pertaining the 
realm of reinforcement learning ---
can be described in terms of \emph{choices}, \emph{costs}, and 
\emph{rewards}. Prompted by that, choice-based programming languages\footnote{
	Such as \textsf{SmartChoice}~\cite{Carbune2019:PredictedVariables}.
}  extend traditional programming languages 
with high-level \emph{decision-making} abstractions 
that allow for the modular design of programs in terms of choices 
and rewards. While in a probabilistic language 
programs are structured in terms of sampling and observing 
--- leaving the actual inference process to the interpreter --- 
in a choice-based language the code is structured by specifying where a choice 
should be made and what its associated cost (or, dually, its reward) is, 
leaving the actual decision-making process to the interpreter. 

All of that considerably changes the way one writes and thinks about software, 
at the same time raising new --- and challenging --- questions
both from the point of view of programming language theory and of machine learning. 
In the former case, in fact, we have to deal with the introduction 
of decision-making abstractions, their implementation, and their semantics. 
In the latter case, instead, we have just began to realise how to tackle modularity 
of machine learning systems using programming language-based techniques. 

The latter point is the main topic of this paper. 
We move from the recent work by 
\citet{DBLP:conf/lics/AbadiP21}, where a \emph{monadic}~\cite{Moggi1998:ComputationalMonads}
approach to choice-based 
programming is developed. There, the authors show how 
the so-called 
\emph{selection monad}~\cite{DBLP:conf/cie/EscardoO10a,DBLP:conf/csl/EscardoOP11,
DBLP:journals/apal/EscardoO12} can be used to 
model (and to give semantics to) programs written 
in a choice-based language as \emph{effectful} programs, and how
it is
possible to manage choices as 
\emph{algebraic operations}~\cite{PlotkinPower/FOSSACS/01,
PlotkinPower2003:AlgebraicOperations,PlotkinPower/FOSSACS/02}, 
delegating the task of 
solving these choices to those who implement the selection mechanism.
In this work, we are concerned with further developing this idea, 
although in a different direction. In fact, even if 
we still deal with 
monadic and algebraic approaches to choice-based programming, 
we explore the use of the \emph{state} --- rather than \emph{selection} --- 
\emph{monad} in the framework of 
reinforcement learning systems. In particular, we show 
how the use of the state monad 
allows for a high level 
of modularity in the 
construction of systems based on reinforcement learning. 

The reader, at this point, may wonder \emph{why} one should consider a 
further kind of monadic programming --- viz. one based on the state monad 
--- when dealing with reinforcement learning systems .  
The next two sections are dedicated to answer this question. There, 
we will study a simple example coming from the reinforcement learning literature, 
highlighting some drawbacks of the selection monad, on the one hand, 
and the strengths of our state-based approach, on the other hand. That 
will also allow us to illustrate the practical advantages 
of functional programming techniques in reinforcement learning modelling. 

For the moment, we simply remark that at the very heart of our approach 
lies a clear separation between the part of the program dealing with the learning 
algorithm and the one that handles the interaction with the environment. 
Such a separation can be naturally structured in the form of a state monad 
with choice and reward operations acting as algebraic operations. However --- and 
here comes the main difference between our approach and the one based 
on the selection monad --- 
the action of such operations 
is \emph{not} determined by the monad (as \emph{de facto} happens when working with 
the selection monad), but it is ultimately given 
by the reinforcement learning algorithm used. 
As a consequence, one can model reinforcement learning systems modularly 
as monadic, state-based programs written using choice and reward operations, 
and then view 
reinforcement learning algorithms as algebraic interpretations of such operations. 
Concretely, that means that we can view (and implement) reinforcement learning 
algorithms as \emph{handlers}~\cite{PlotkinPretnar/handling-algebraic-effects-2013,
Bauer-Pretnar/Programming-with-algebraic-effects,
Pretnar2015:EFF} giving interpretations to choice and reward operations.
Different learning algorithms then give different handlers --- and thus 
different interpretations of choices and rewards ---
so that it is possible to instantiate the very same program 
with different learning algorithms by simply changing the way its
algebraic operations are handled. 

Summing up, the main contributions of this paper are the following:
\begin{itemize}
\item The development of a modular approach to reinforcement learning systems 
throughout functional programming language techniques. 
Among such techniques, the main role is played by the \emph{state monad} 
and its associated \emph{algebraic operations} for performing choices and 
rewards. Crucially, the use of the state monad (as opposed to the selection monad) 
allows us to consider different interpretations of choices and rewards, such 
interpretations ultimately \emph{being} the reinforcement learning algorithm 
used. We make use of \emph{handlers} to implement these interpretations in a 
modular fashion. 
\item The analysis of our approach in a core $\lambda$-calculus with 
	handlers and algebraic operations. This allows us to isolate the 
	exact features a language needs to have in order to support our 
	functional approach to reinforcement learning systems. 
\item A preliminary study of how to use semantic-based techniques (notably 
polymorphic and graded type systems) to ensure correct behaviours of 
reinforcement learning systems when implemented in a functional way. 
\end{itemize}


\section{Structuring Reinforcement Learning Applications Through Functional 
Programming}
\label{s:OurApproach}
To begin with, we are going to illustrate our approach on a simple running 
example, common in the reinforcement learning 
literature~\cite{Sutton2018:ReinforcementLearning}, namely the so called 
\emph{multi-armed bandit} problem, denoted MAB. In 
this setting, a gambler faces a row of $k$ slot machines, 
each of which distributes rewards in the form of winnings according 
to a probabilistic model unknown to the gambler. The gambler, it goes without 
saying, wants to maximize its gains over a fixed number of rounds. At each 
round, the gambler chooses a machine and obtains a reward depending on its 
reward distribution. On the one hand, then, the gambler wants to maximize her 
gains, but on the other she knows nothing about the inner working of the 
machines. This example, indeed, expresses the need of a trade-off between 
\emph{exploration} and \emph{optimization}: the gambler wants to play on the 
best machine as much as possible, but in order to find this best machine, she 
needs to try \emph{every} machine. Moreover, since rewards are random, playing 
once on each machine, then playing only on the one with the best observed 
reward, may not be optimal in the long run. Reinforcement learning indeed 
focuses in offering techniques which explores this trade-off in a meaningful 
way.

What we are interested in doing here is deriving a strategy for 
the agent in the aformentioned problem \emph{using} reinforcement learning 
techniques, and \emph{with the help} of functional programming. We are 
\emph{not} interested at devising new algorithms, but rather at showing 
that functional programming can help in \emph{giving structure} to programs 
which comprise both the proper learning algorithm, but also the interaction 
with the environment.  

Conceptually, it is natural to see a program solving MAB by way of reinforcement learning as structured into three parts:
\begin{itemize}
	\item
	The first one, the \textbf{environment}, serves as an interface with the 
	outside world, making it visible to the program. Here, this part comprises 
	the $k$ slot machines, possibly through a system of sensors and functions 
	querying those sensors.
    \item
    The second one, the \textbf{learner}, which provides generic reinforcement 
    learning algorithms, possibly through libraries. There are many algorithms 
    the literature offers, for example gradient learning, Q-learning, expected Sarsa or other TD-methods \cite{Sutton2018:ReinforcementLearning}.  
    \item
    The \textbf{user}, namely some code letting the environment and the learner 
    exchange information, thus acting as a bridge between the two. Here, the 
    user is supposed to turn events and data coming from the environment into a 
    form which can be understood by the learner.
\end{itemize}
One of the key ingredients of our proposal consists in structuring these three 
parts and their interaction by stipulating that the 
user proceeds by invoking some algebraic operations provided by both the 
environment and the learner. This is schematized in 
Figure~\ref{f:graphsummary}.  
\begin{figure}
	\begin{framed}
		\begin{tikzpicture}
			\node[draw, text width=3cm] (env) 
			{\centering \textbf{Environment } \\};
			\node[draw, text width = 3cm]  (user) [below right=of env] 
			{\centering \textbf{User} \\ \flushleft $O_E$ : Observations \\ 
			$A_E$ : Actions};
			\node[draw, text width = 3cm]  (user) [below right=of env] 
			{\centering \textbf{User} \\ \flushleft $O_E$ : Observations \\ 
			$A_E$ : Actions};
			\node[draw, text width = 4.7cm]  (learner) [above right=of user] 
			{\centering \textbf{Learner} \\ \flushleft $O_{\RL}$ : Abstract 
			Observations \\ $A_{\RL}$ : Abstract Actions};
			\draw[->] (env.east) to [bend left] node[below left] 
			{$\op{observe}$} (user.110) ;
			\draw[->] (user.west) to [bend left] node[above right] {$\op{do}$} 
			(env.south) ;
			\draw[->] (learner.west) to [bend right] node[below right] 
			{$\op{choice}$} (user.70) ;
			\draw[->] (user.10) to [bend right] node[above left] 
			{$\op{reward}$} (learner.250) ;
			\draw[dotted] (user.350) to [bend right] node[below right] 
			{Abstract Interface} (learner.310) ;
		\end{tikzpicture}
	\end{framed}
	\caption{Our approach}
	\label{f:graphsummary}
\end{figure}
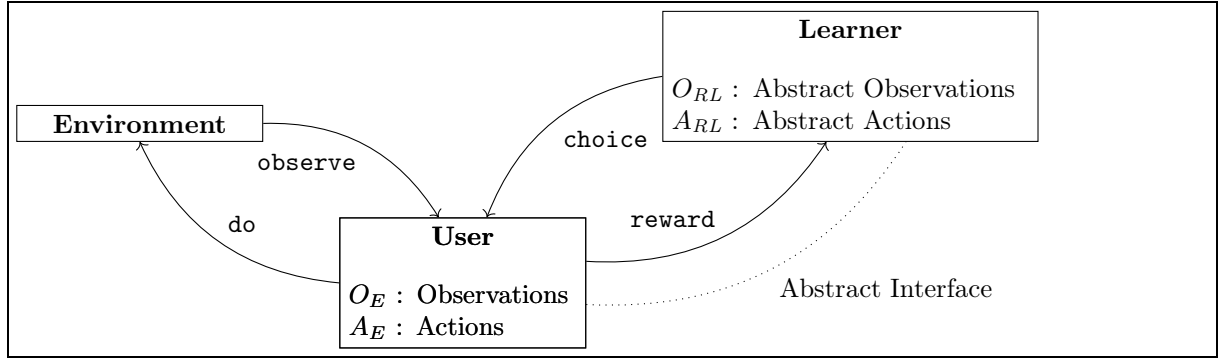
The main algebraic operations are the four mentioned in the figure, two of them 
provided by the environment and two provided by the learner. The two algebraic 
operations provided by the environment are $\op{observe}$ and $\op{do}$. 
The former,  having type $1 \ararrow O_E$ allows the user to retrieve some data 
from the environment. In our multi-armed bandit example, it corresponds to 
observing the gain on the current slot machine. The operation $\op{do} : A_E 
\ararrow 1$ executes an action in the environment. In our running example, an 
action would be the choice of one of the $k$ slot machines. Those operations 
depend the set of possible observations and the set of 
possible actions in the environment, those sets being represented by two types 
$O_E$ and $A_E$ in the user code. In our example, the observations are the 
gains, thus we can give the type $O_E = \R$. As for actions, they stand for 
choices of a slot machine, and thus $A_E = \set{1,\dots,k}$. 

Then, there are two other operations for the learner, namely $\op{choice}$ and 
$\op{reward}$. The operation $\op{choice} : 1 \ararrow A_E$ asks the learner 
to take an action for us. In our setting it means that a call to $\op{choice}$ 
returns an integer between $1$ and $k$ designing the slot machine the user 
should play on. The $\op{reward} : \type{Real} \ararrow 1 $ operation allows 
the user to give a feedback to the learner depending on its previous choice. In 
our example, the typical reward would be the gain obtained from the slot 
machine chosen by the learner. There is also an additional bridge between the 
user and the learner, that we call the \emph{abstract interface}: the learner 
may need more information than the feedback given by 
$\op{reward}$, and this information will be transmitted using the abstract 
interface by way of some additional algebraic operations invoked by the learner 
but handled by the user.

One advantage of the just described approach is to show that this learner part 
of the program, i.e., the handler for the operations $\op{choice}$ and 
$\op{reward}$, can be implemented \emph{independently} from the details of the 
environment, once and for all. One could even conceive to have a library of 
handlers, where each set of handlers corresponds to a reinforcement learning 
algorithm, in such a way that the user could choose one of these handlers for 
each of the environments he has access to, implementing the abstract interface 
but without delving into the details of the underlying algorithms. Similarly, 
implementing any new RL algorithm would boil down to just write a new set of 
handlers for  $\op{choice}$ and $\op{reward}$. 

In which \emph{order} should the user invoke the various algebraic operations?
A typical sequence of interactions would be $\op{choice} \rightarrow \op{do} 
\rightarrow \op{observe} \rightarrow \op{reward}$, meaning intuitively that it 
asks for a choice, and then does explicitly this action in the environment, 
observes the results of this action, and finally produces a reward. As we 
will soon see, it is preferable that the user indeed respects this order, e.g., 
when handling the $\op{reward}$ operation, the learner may need to observe the 
environment (abstracted by the interface) since a valuation function typically 
depends on the state reached after an action. 

In order to have a learner independent from the environment, it must not make 
explicit reference to the types $\Ae$ and $\Oe$, as we saw on MAB. Thus, we introduce two \emph{finite} sets $O_{\RL}$ and $A_{\RL}$ 
corresponding to the abstract sets of observations and actions, respectively. 
Except for their sizes, those finite sets must be totally independent form the 
environment. It means that in practice, the learner only has access to some 
elements of those sets, and should make choices in $\Arl$ depending only on the 
abstract observations from $\Orl$ it receives from the user. Typically, both 
the policy to choose actions and the learning part are built by constructing 
and updating a \emph{value function} mapping pairs in $O_{\RL} \times A_{\RL}$ to 
a reward estimation, for example an element of $\R$. 

\subsection{A Naive RL Algorithm in \EFF}

We present our approach practically with a very basic algorithm for 
reinforcement learning, which works by only remembering an evaluation function 
keeping track of \emph{the expectation} 
of the immediate rewards it receives for any of the actions, and chooses an 
action by way of a so-called $\varepsilon$-greedy policy 
\cite{Sutton2018:ReinforcementLearning}: 
with probability $(1 - \varepsilon)$, it 
makes the best possible choice based on the \emph{current} valuation, and 
with 
probability $\varepsilon$ it explores by choosing an action uniformly at 
random. To implement this algorithm, we use the language \EFF 
~\cite{Pretnar2015:EFF}, an \OCAML-based language for effects and handlers. The 
syntax of \EFF\ should be 
understandable to anyone with some basic knowledge on \OCAML, effects and 
handlers. We use lists for 
the sake of simplicity, although other kinds of data structures would enable 
better performances. The source code as well as other examples can be found in 
\cite{GhyselenEFFCode}.

\paragraph{Declaring The Abstract Interface.}
The first step towards implementing the RL algorithm consists in declaring the 
sets $O_{\RL}$ and $A_{\RL}$ together with the abstract interface which will 
allow us to recover some information on the sets $O_{\RL}$ and $A_{\RL}$:
\begin{lstlisting}[style = eff]
type rl_act
type rl_obs 
effect rl_observe : unit -> rl_obs 
effect rl_getavailableact : rl_obs -> rl_act list 
\end{lstlisting}
Here, the first effect allows the learner to observe the environment (after the 
abstraction) and to get, for each observation, the list of available actions it 
has to choose from. As we stated before, the learner does \emph{not} have 
access to the interpretation of those two effects nor to the actual types, and 
only knows that those sets of abstract actions and observations are finite, 
this being enough to implement the reinforcement learning algorithm. 

\paragraph{Handling Choices and Rewards.}
The handler makes essential use of the state monad, where the state represents 
the memory of the learner. For the specific RL algorithm we are targeting now, 
this memory consists of an element of this type:
\begin{lstlisting}[style = eff]
type memory = ((rl_obs*((rl_act*int*float) list)) list)*int*int
\end{lstlisting}
The left type of the internal memory corresponds to a \emph{value function}, for each pair of an observable and an action, we give an estimation 
of the immediate reward we can obtain. It is computed as the \emph{average} 
reward, and in order to do this average incrementally, it is common to also 
remember the number of times a choice has been made. Then, the internal memory 
also remembers the \emph{last choice made} using a pair of integers denoting 
indexes in the evaluation function (we usually denote $na$ the index of an 
action and $no$ the index of an observation). Then, we can implement the 
$\varepsilon$-greedy policy of the learner (we only describe the important 
functions and not the simple intermediate one). 
\begin{lstlisting}[style = eff]
(* This function takes as input a probability and a list of reward estimation
and returns the selected action and its index in the list *)
let greedypolicy ($\epsilon$,l) = 
  if ((randomfloat 1.) <= $\epsilon$) then 
  (* Uniform choice in this case *)
  begin 
    let na = randomint (list_length l) in
    (* Find the action with index na *)
    let a = findact l na in 
    (na,a)
  end  
  else
    (* Select the action with the maximal estimated reward *) 
    argmax l
;;
\end{lstlisting}
And with this, and some other auxiliary functions, we can define the handler 
for the basic RL algorithm. The only non-standard clause for the handler is the 
last one, starting with $\mathtt{finally}$, in this setting with a state 
monad, it should be understood as the initial state for a computation. 
\begin{lstlisting}[style = eff]
(*The first input is the probability of exploring, 
the second is the initial estimation*)
let rl_naive $\epsilon$ v = handler 
  (*declare a state monad, with the type described above*)
  | y -> (fun (_:memory) -> y) 
  | effect Choice k -> fun (l,_,_) -> 
    (*use the interface to get an observation*)
    let o = rl_observe () in 
    (* extract the index no for o, with its list of estimations q *)
    let (l',no,q) = getstateestimate l o v in 
    (* select the action a with the greedy policy *)
    let (na,a) = greedypolicy ($\epsilon$,q) in 
    (*update the estimations*)
    let l'' = updatestate l' no (fun ll -> updatechoice ll na) in 
    (*return action a, with the updated memory*)
    (continue k a) (l'',no,na) 
  | effect (Reward r) k -> fun (l,no,na) -> 
    (*update estimations for the previous choice (no,na) *)  
    let l' = updatestate l no (fun q -> updatereward q na r) in 
    (*give the new memory to the continuation*)
    (continue k ()) (l',no,na) 
  (*initial memory*)
  | finally f -> f ([],0,0) 
;;    
\end{lstlisting}
Note that we initialize the estimations lists only when we see an element of 
$O_{\RL}$ for the first time. This is standard in RL because there could be an 
extremely large set of states, and it may well be that not all of them are 
reached --- it may be better to give an initial estimation to a state only when 
we actually see this state 
(in this program, this is done by the $\mathtt{geststateestimate}$ function). 
We do not give the code of the update functions, which is anyway easy to write. 
The nice thing of this handler is that it \emph{does not depend} on the 
environment, it can of course interact with its environment (using 
$\mathtt{rl\_observe}$) but it does so in a modular way, so that this program 
can be used in \emph{any environment} in which we would like to experiment this 
(admittedly naive) algorithm.  

\subsection{The Multi-Armed Bandit in \EFF}

We show how to make use of the RL algorithm described in the previous section 
on the environment coming from MAB. The first step consists 
in declaring the types for $O_E$ and $A_E$. Since rewards are earnings, and our 
actions are nothing more than a choice of a specific machine, 
we can proceed as follows:
\begin{lstlisting}[style = eff]
type env_obs = float 
type env_act = int
\end{lstlisting}

\paragraph{Modeling the Environment}
We model MAB as a program, where we take a very simple  
distribution of rewards for the sake of the example. As stated before, the 
environment correspond to the handling of $\op{observe}$ and $\op{do}$. 
\begin{lstlisting}[style = eff]
type env_state = float 
	
(* The random reward of the machine a. 
In a real case, this reward should be obtained 
by observing the result of the slot machine *)
let getreward (a:act) = (float_of_int a) +. (randomfloat 10.) ;;
	
(*max corresponds to the number of slot machines *)
let MAB_handler max = handler
  | y -> (fun (_:env_state) -> y)
  | effect (Do a) k -> fun _ -> 
    (* When seeing a valid action a, we compute the gain for the 
    slot machine a and store the result in the memory *)
    if (a > 0) && (a <= max) then (continue k ()) (getreward a) 
    else raise "This action is not available! \n"
  (*An observation corresponds to showing the stored result *)  
  | effect Observe k -> fun r -> (continue k r) r   
  (* Initial environment, no rewards observed *)
  | finally f -> f 0.
;;
\end{lstlisting}

\paragraph{Implementing the Abstract Interface}
The abstract interface is a handler that implements the types $\Arl$ and 
$\Orl$, and handles the algebraic operations declared by the learner:
\begin{lstlisting}[style = eff]
(* Abstractions Types. The type unit for observations 
means that the learner has no information on the environment *)
type rl_obs = unit ;;
type rl_act = int ;;
effect rl_observe : rl_obs 
effect rl_getavailableact : rl_obs -> rl_act list 
(*Transform a standard observation into an abstract one*)
let abstractobs (o : env_obs) :obs = () ;;
	
(*The handler describes:
  - the abstraction of observations
  - the actions available to the learner *)
let abs_MAB max = 
let l = list_enumerate 1 (max + 1) in 
handler 
  | effect rl_observe k ->  let o = observe () in 
    continue k (abstractobs o) 
  | effect (rl_getavailableact o) k -> continue k l
;;	
\end{lstlisting}

\paragraph{The Main Program}

With this, we have everything we need to handle the four main operations. In 
order to use the learner described above, we can just open the file in which 
the RL algorithm is defined and use it as a handler, with the interface 
described above. For example, we can write the main program:

\begin{lstlisting}[style = eff]
#use "MAB_Environment.eff" ;;
#use "RL_Naive.eff";;
	
(*Multi-Armed Bandit with 6 machines *)
with (MAB_handler 6) handle 
(*Provides the interface to the learner *)
with (abs_MAB 6) handle 
(* Call the Basic RL algorithm described previously *)
with (rl_naive 0.05 10.) handle 
(*Start writing your progam with algebraic operations 
Here, we do 500 rounds *)
  let rec run n r = 
    if n = 0 then r else   
    let a = choice () in (do a);
    let r' = observe () in 
    reward r'; run (n-1) (r +. r')
  in run 500 0. 
;;
\end{lstlisting}

And the point is that if we want to use another learner we only have to load a 
different handler for the learner, and this naive learner can be used in any 
environment as long as the abstract interface is handled.

\section{The Selection Monad, and Why it is Not an Answer}
\label{selection-monad}

To gain a better understanding of the differences between our \emph{state}-based 
approach and the \emph{selection}-based approach by \citet{DBLP:conf/lics/AbadiP21}, 
let us illustrate some of the main drawbacks exhibited by the selection monad 
when applied to MAB. To do so, let us first 
shortly recap the underlying mechanism behind such a monad.

The selection monad is defined through the functor $S(X)=(X\rightarrow R)\rightarrow 
X$, where $R$ is a (usually ordered) set of rewards. Intuitively, a computation in 
$S(X)$ takes in input a reward function $f: X \to R$ associating to each element 
in $X$ a reward in $R$ (we can think about $f(x)$ as a measure of 
the goodness of $x$), 
and chooses an element in $X$ that, intuitively, 
is optimal for $f$.
By its very definition, even if the set $R$ of rewards is a 
parameter of the selection monad, the latter does not 
have direct access to it: the only way to interact with rewards is through 
a reward function, meaning that the selection monad, by itself, 
does not handle rewards \emph{directly}. 
\citet{DBLP:conf/lics/AbadiP21} overcome this problem by 
combining the selection monad with another monad $T$ giving direct access to rewards, 
this way obtaining (a monad whose carrier is) the functor
$$
S_T(X)=(X\rightarrow R)\rightarrow T(X).
$$
Here, $T$ can for example be the $R$-based writer monad $T(X) = R\times 
X$ --- this way, rewards can be found in the first component of the result 
whenever a choice is made. One can even go beyond that and take 
$T(X) = D_\mathit{fin}(R\times X)$, where $D_\mathit{fin}$ is the finite 
distribution monad, this way modeling stochastic rewards. 

Using this strategy --- i.e. working with $S_T$ rather than with $S$ alone --- 
we can indeed model (stochastic) rewards as prescribed by the multi-armed bandit problem. 
For that, we just consider the monad 
$(X\rightarrow \mathbb{R}) \rightarrow  D_\mathit{fin}(\mathbb{R} \times X)$. 
The last ingredient needed to attempt a selection monad-based model to MAB is to give \emph{choice operations}, which ultimately 
constitute the way to construct selection computations. As the reader may guess, 
this is the most delicate point.

The selection monad comes with a choice operation $\op{choice}$  
which given two selection computations $a$ and $b$ in $S(X)$, 
returns a new computation $\op{choice}(a,b)$ belonging to $S(X)$ working as follow: 
given a reward function $f:X\rightarrow R$, 
$\op{choice}(a,b)$ passes $f$ to both $a$ and $b$, this way obtaining 
two candidate optimal elements $a(f)$ and $b(f)$, and then 
chooses between the latter on the basis of $f$. 
More precisely, given $a(f)$ and $b(f)$, we can obtain a reward for each of them 
as the elements 
$f(a(f))$ and $f(b(f))$ in $R$. Assuming $R$ to come with a binary relation $\preceq$ 
ranking rewards, 
we then let $\op{choice}(a,b)(f)$ to return the best one between $a(f)$ and $b(f)$ 
according to $\preceq$. 

At this point, we can already start perceiving the main
issue behind choice operations: to choose between two computations, we have to simulate 
\emph{both} 
of them, and then take the optimal one.
In any case, this concerns the selection monad $S$ alone. What 
about the monad $S_T$, i.e. the combination of the 
selection monad with a monad $T$ giving access to rewards? 
Here, the situation is slightly more complicated, but the basic mechanism behind 
choice operations is essentially the same one for $S$. Proceeding as for the latter, 
we obtain elements $a(f)$ and $b(f)$ belonging to 
$T(X)$: this time, however, we cannot directly apply the reward function $f$ 
on them.
The solution proposed by \citet{DBLP:conf/lics/AbadiP21} is to require 
to have a $T$-algebra $\alpha: T(R) \to R$, so that we can apply 
$T(f)$ --- rather than $f$ --- on $a(f)$ and $b(f)$, this way obtaining 
monadic rewards $T(f)(a(f))$, $T(f)(b(f))$ in $T(R)$; map them into $R$ using 
$\alpha$; and then choose the optimal one according to the order $\preceq$ on 
$\mathbb{R}$. For instance, taking real numbers as rewards, we can take, e.g.,
$\alpha: \mathbb{R} \times \mathbb{R} \to \mathbb{R}$ as real-number by addition. 

Notice that dealing with choice operations this way, we not only still have
the same problem seen for $S$ --- namely that choosing between computations
require to simulate \emph{all} of them --- but we also have to compute 
monadic applications of $f$. That is, the application of $T(f)$ to 
an element $\phi \in T(X)$ usually requires to compute the reward $f(x)$ 
for each element $x$ which is, intuitively, part of $\phi$. For instance, 
if we think about $\phi$ as a distribution, then computing $T(f)(\phi)$ 
requires to compute the reward of each element in the support of $\phi$. 
All of that does not fit well with the very essence of reinforcement learning, 
as we are going to see.  

Let us now come to reinforcement learning systems and to the multi-armed bandit example, 
highlighting the main problem 
behind the semantic mechanism of choice operations. 
First, let us notice that the selection monad 
seems to provide an adequate setting for reinforcement learning. In fact, taking 
the monad 
$(X\rightarrow \mathbb{R}) \rightarrow  D_\mathit{fin}(\mathbb{R} \times X)$,
we obtain choices and (stochastic) 
rewards, and the former can in principle depend on the latter. 
In such a setting, the choice operation is defined parametrically 
with respect to an algebra 
$\alpha: D_\mathit{fin}(\mathbb{R} \times \mathbb{R}) \to \mathbb{R}$, 
which can be naturally defined as follows: given a distribution 
$\phi$, we first apply addition to all the elements in the 
support of $\phi$, and then compute the resulting 
expectation. Given such an algebra $\alpha$, 
to perform a choice operation $\op{choose}$ between two computations 
$a$ and $b$ with a reward function $f$, we first compute the probability distributions 
$a(f)$ and $b(f)$, and 
then compute the expected reward associated to \emph{each} such a distribution, 
which means first computing the reward of \emph{each} element in its support. 

The latter passage is not reasonable from reinforcement learning perspective.  
In fact, when dealing with reinforcement learning systems, implementing choices operations this way forces us to $(i)$ simulate \emph{all} actions; $(ii)$ to obtain 
a perfect knowledge of the environment; $(iii)$ and to make (optimal) choices based on that. 
In MAB, that means $(i)$ simulating playing on \emph{each} machine; 
$(ii)$ observing the \emph{whole} distributions associated to each machine; 
$(iii)$ and then select the one with the best expectation, which in turn
requires to compute the rewards of \emph{each} element in the support of the 
distribution of a machine. All of that is simply to strong to give a reasonable model of 
reinforcement learning systems. Notice that all of that is essentially independent 
of the choice of the algebra $\alpha$, meaning that the problem lies at the very hearth 
of he selection monad (and its choice operations) rather than on the concrete way 
one aggregates monadic rewards.

\section{From Practice to Theory: Effects and Handlers}
\label{s:Generalization}
In this section, we try to inject the ideas we developed in Section~\ref{s:OurApproach} into 
a paradigmatic programming language, so as to be able to isolate the features 
we deem necessary.

\subsection{A Core Language with Effects and Handlers}
For the sake of simplicity, we take the core language described in 
\cite{Pretnar2015:EFF} with additional base types corresponding to observations 
and actions and a simple type system without type effects. We will not talk 
about the details of the underlying effect, such as randomness, I/O, or 
exceptions. Those effects are obviously needed in practice, but adding them to 
the theory would be standard, so for the sake of simplicity we ignore those, 
and when we write a type $T \rightarrow T'$, this function should be understood 
as a function that can use those standard effects. 

The grammar for terms and types is in Figure~\ref{f:effgrammar}.
\begin{figure}
	\begin{framed}
		\begin{align*}
			\text{(Values) } V &::= x \midd \la x. C \midd (V,V) \midd () \midd 
			\underline{c} \midd \underline{f} \midd H \\
			\text{(Handlers) } H &::= \key{handler} \{ \key{return} x \mapsto 
			C, \op{op}_1(x;k) \mapsto C, \dots, \op{op}_n(x;k) \mapsto C \} \\
			\text{(Computations) } C &::= \key{return} V \midd \pi_i(V) \midd 
			\op{op}(V;x.C) \midd \key{let} x = C ~\key{in} C \midd V~V \midd 
			\key{with} V ~\key{handle} C \\
			\text{(Types) } T &::= \B \midd \Unit \midd T \times T \midd T 
			\rightarrow T \midd T \Rightarrow T\\
		\end{align*}
	\vspace{-30pt}
	\end{framed}
	\caption{Syntax and Types of Terms}
	\label{f:effgrammar}
\end{figure}
In other words, we work with a $\la$-calculus with pairs base types (ranged 
over by $\B$), constants and functions for those base types, denoted by 
$\underline{c}$ and 
$\underline{f}$. Any symbol $c \in \mathcal{C}$ is associated to a 
base type $\B$, while any symbol $f \in \mathcal{F}$ comes equipped with a 
function type $\B_1 \times \cdots \times \B_n \rightarrow \B$. Moreover, we 
have algebraic 
operations, each operation symbol $\op{op}$ coming with a type $\op{op} : 
T_p \ararrow T_a \in \Sigma$ where $T_p$ is the type of parameters and $T_a$ 
is the arity. In a computation $\op{op}(V;x.C)$, the variable $x$ is bound in 
$C$. Those operations are handled by handlers. An 
handler has the following form
$$
\key{handler} \{ \key{return} x \mapsto C_r, \op{op}_1(x;k) \mapsto C_1, \dots, 
\op{op}_n(x;k) \mapsto C_n \}
$$
Here, the computations $C_1,\ldots,C_n$ are pieces of code meant to handle the
corresponding algebraic operation, while $C_r$ is meant to handle a return 
clause. We use the arrow $\Rightarrow$ to denote handler types, contrary to the 
function type that uses $\rightarrow$. When one wants to use the aforementioned 
handler for the purpose of managing some algebraic operations, we do by way of 
a term in the form
$$
\key{with} V ~\key{handle} C
$$ 
in which $C$ is executed in a protected environment such that any algebraic 
operations produced by $C$ is handled by the handler $V$, provided it is one 
among those declared in it.

The typing rules for this language are given in Figure~\ref{f:efftype}. In a computation $\op{op}(V;x.C)$, $C$ can be seen as a continuation for the computation, with type $T_a \rightarrow T$, this is why in the typing of handler, the second parameter $k$ has this type. The typing rule for the handler with simple types looks like a function application, where the handler transforms a computation of type $T_2$ to a computation of type $T_1$. 

\begin{figure}
	\begin{framed}
		\begin{center}
			\AXC{}
			\UIC{$\Gamma, x : T \p x : T$}
			\DP 
			\qquad 
			\AXC{$\Gamma, x : T_1 \p C : T_2$}
			\UIC{$\Gamma \p \la x. C : T_1 \rightarrow T_2$}
			\DP 
			\qquad 
			\AXC{$\Gamma \p V_1 : T_1$}
			\AXC{$\Gamma \p V_2 : T_2$}
			\BIC{$\Gamma \p (V_1,V_2) : T_1 \times T_2$}
			\DP
			\\ 
			\vvspace
			\AXC{}
			\UIC{$\Gamma \p () : \Unit$}
			\DP
			\qquad
			\AXC{$c : \B \in \mathcal{C}$}
			\UIC{$\Gamma \p \underline{c} : \B$}
			\DP
			\qquad
			\AXC{$f : \B_1 \times \cdots \B_n \rightarrow \B \in \mathcal{F}$}
			\UIC{$\Gamma \p \underline{f} : \B_1 \times \cdots \B_n \rightarrow \B$}
			\DP
			\qquad 
			\\ 
			\vvspace
			\AXC{$\{(\op{op}_i : T^p_i \ararrow T^a_i) \in \Sigma \qquad 
			\Gamma, 
			x : T_i^p, k : T_i^a \rightarrow T_c \p C_i : T_c\}_{1 \le i \le 
			n}$}
			\AXC{$\Gamma, x : T_v \p C_r : T_c $}
			\BIC{$\Gamma \p \key{handler} \{ \key{return} x \mapsto C_r, \op{op}_1(x;k) \mapsto C_1, \dots, \op{op}_n(x;k) \mapsto C_n \} : T_v \Rightarrow T_c $}
			\DP
			\\ 
			\vvspace
			\AXC{$\Gamma \p V : T$}
			\UIC{$\Gamma \p \key{return} V : T$}
			\DP 
			\qquad 
			\AXC{$\Gamma \p V : T_1 \times T_2$}
			\UIC{$\Gamma \p \pi_i(V) : T_i$}
			\DP 
			\\ 
			\vvspace
			\AXC{$(\op{op} : T_p \ararrow T_a) \in \Sigma$}
			\AXC{$\Gamma \p V : T_p$}
			\AXC{$\Gamma, x : T_a \p C : T $}
			\TIC{$\Gamma \p \op{op}(V;x.C) : T$}
			\DP 
			\\ 
			\vvspace 
			\AXC{$\Gamma \p C_1 : T_1 $}
			\AXC{$\Gamma, x : T_1 \p C_2 : T_2 $}
			\BIC{$\Gamma \p \key{let} x = C_1 ~\key{in} C_2 : T_2$}
			\DP 
			\qquad 
			\AXC{$\Gamma \p V_1 : T_2 \rightarrow T_1$}
			\AXC{$\Gamma \p V_2 : T_2$}
			\BIC{$\Gamma \p V_1~V_2 : T_1$}
			\DP
			\\ 
			\vvspace 
			\AXC{$\Gamma \p V : T_2 \Rightarrow T_1$}
			\AXC{$\Gamma \p C : T_2$}
			\BIC{$\Gamma \p \key{with} V ~\key{handle} C : T_1$}
			\DP
		\end{center}
	\end{framed}
	\caption{Typing Rules}
	\label{f:efftype}
\end{figure}

Then, the dynamic semantics is given in Figure~\ref{f:effsemantics}. Algebraic 
operations can commute with the $\key{let}$ constructor and handlers for other 
operations. As for handlers, the return computation is handled by the return 
clause, and an algebraic operation is handled by the corresponding computation 
in the handler, where the continuation $k$ is replaced by the actual 
continuation $C$ with the same handler. 

\begin{figure}
	\begin{framed}
		\begin{center}
			\small 
			\AXC{}
			\UIC{$\pi_i(V_1,V_2) \red V_i$}
			\DP 
			\qquad 
			\AXC{$C_1 \red C_1'$}
			\UIC{$\key{let} x = C_1 ~\key{in} C_2 \red \key{let} x = C_1' ~\key{in} C_2 $}
			\DP 
			\\
			\vvspace 
			\AXC{}
			\UIC{$\key{let} x = \op{op}(V;y.C_1) ~\key{in} C_2 \red \op{op}(V;y.\key{let} x = C_1 ~\key{in} C_2) $}
			\DP 
			\qquad 
			\AXC{}
			\UIC{$\key{let} x = \key{return} V ~\key{in} C_2 \red C_2 \sub{x}{V} $}
			\DP
			\\ 
			\vvspace 
			\AXC{}
			\UIC{$(\la x. C)~V \red C \sub{x}{V}$}
			\DP
			\qquad 
			\AXC{$V = (\underline{c_1},\dots,\underline{c_n}) $}
			\AXC{$ c = f(c_1,\dots,c_n) $}
			\BIC{$\underline{f}~V \red \underline{c}$}
			\DP
			\\ 
			\vvspace 
			For the following rules, we denote $H = \key{handler} \{ \key{return} x \mapsto C_r, \op{op}_1(x;k) \mapsto C_1, \dots, \op{op}_n(x;k) \mapsto C_n \}$
			\\ 
			\vvspace 
			\AXC{$C \red C'$}
			\UIC{$\key{with} H ~\key{handle} C \red \key{with} H ~\key{handle} C'$}
			\DP 
			\\
			\vvspace 
			\AXC{$\op{op} \notin \set{\op{op}_1,\dots,\op{op}_n}$}
			\UIC{$\key{with} H ~\key{handle} \op{op}(V;x.C) \red \op{op}(V;x.\key{with} H ~\key{handle} C)$}
			\DP 
			\\ 
			\vvspace 
			\AXC{}
			\UIC{$\key{with} H ~\key{handle} \key{return} V \red C_r \sub{x}{V}$}
			\DP 
			\\ 
			\vvspace  
			\AXC{}
			\UIC{$\key{with} H ~\key{handle} \op{op}_i(V;x.C) \red C_i \sub{x}{V} \sub{k}{\la x. \key{with} H ~\key{handle} C }$}
			\DP 
		\end{center}
	\end{framed}
	\caption{Semantics}
	\label{f:effsemantics}
\end{figure}

With this, we have a complete description of a core language with effects, handlers and base types. In practice, we also want other standard constructors for booleans (if then else), lists or other data structures, but this can be easily added to the core language without any theoretical difficulties. It is then relatively standard to prove the subject reduction of this type system, proving some simple kind of safety: 

\begin{theorem}[Subject Reduction]
	If $\Gamma \p C : T$ and $C \red C'$ then $\Gamma \p C' : T$
\end{theorem}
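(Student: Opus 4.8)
The plan is a standard type-preservation argument: induction on (equivalently, case analysis on the last rule of) the derivation of $C \red C'$, appealing to an inversion lemma on the typing of $C$ and to a substitution lemma for the $\beta$-like redexes.

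\textbf{Preliminary lemmas.} First I would establish two routine auxiliary results, each by a mutual induction on the typing derivation of a value / computation. \emph{Weakening}: if $\Gamma \p C : T$ and $x \notin \mathrm{dom}(\Gamma)$ then $\Gamma, x : T' \p C : T$, and likewise for values. \emph{Substitution}: if $\Gamma, x : T' \p C : T$ and $\Gamma \p V : T'$ then $\Gamma \p C\sub{x}{V} : T$, and likewise for values; weakening is what is used in the binder cases here. Since every syntactic form is matched by exactly one typing rule, the corresponding \emph{inversion} (generation) lemmas are immediate, e.g. $\Gamma \p \op{op}(V;x.C) : T$ forces $(\op{op} : T_p \ararrow T_a) \in \Sigma$, $\Gamma \p V : T_p$ and $\Gamma, x : T_a \p C : T$; $\Gamma \p \key{with} V ~\key{handle} C : T$ forces $\Gamma \p V : T' \Rightarrow T$ and $\Gamma \p C : T'$ for some $T'$; and so on. I will also assume, as part of the well-formedness of the signature, that the interpretation of function symbols respects types: if $f : \B_1 \times \cdots \times \B_n \rightarrow \B$ and each $c_i : \B_i$, then $f(c_1,\dots,c_n) \in \mathcal{C}$ has type $\B$.

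\textbf{Main case analysis.} The projection, $\beta$- and $\delta$-reduction cases, together with the two ``return'' cases ($\key{let} x = \key{return} V ~\key{in} C_2$ and $\key{with} H ~\key{handle} \key{return} V$), follow directly from inversion plus the substitution lemma (and, for $\underline{f}\,V$, the signature assumption above). The two congruence cases (reduction inside the first argument of a $\key{let}$, or inside the handled computation of a $\key{with}\,\dots\,\key{handle}$) follow from inversion plus the induction hypothesis. The two ``commuting'' cases, where an operation is pushed past a $\key{let}$ or past a handler not covering it, follow from inversion plus weakening: after peeling off the $\op{op}(V;x.\_)$ frame, the residual body must be retyped in the extended context $\Gamma, x : T_a$, so one weakens the typing of $C_2$ (resp. of $H$) to push it under the new binder.

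\textbf{The main obstacle} is the handler/operation step $\key{with} H ~\key{handle} \op{op}_i(V;x.C) \red C_i\sub{x}{V}\sub{k}{\la x.\,\key{with} H ~\key{handle} C}$. Here I would: $(i)$ invert the redex to get $\Gamma \p H : T_v \Rightarrow T_c$ with $T = T_c$ and $\Gamma \p \op{op}_i(V;x.C) : T_v$; $(ii)$ invert the latter to get $(\op{op}_i : T_p \ararrow T_a) \in \Sigma$, $\Gamma \p V : T_p$ and $\Gamma, x : T_a \p C : T_v$; $(iii)$ invert the typing of $H$ to get $(\op{op}_i : T_i^p \ararrow T_i^a) \in \Sigma$ and $\Gamma, x : T_i^p, k : T_i^a \rightarrow T_c \p C_i : T_c$; $(iv)$ use uniqueness of the declared type of $\op{op}_i$ in $\Sigma$ to identify $T_p = T_i^p$ and $T_a = T_i^a$; $(v)$ derive $\Gamma \p \la x.\,\key{with} H ~\key{handle} C : T_a \rightarrow T_c$ by weakening $\Gamma \p H : T_v \Rightarrow T_c$ to $\Gamma, x : T_a$ and using the $\Gamma, x : T_a \p C : T_v$ obtained in $(ii)$; and $(vi)$ apply the substitution lemma twice, in the right order — first substituting $V$ for $x$ in $C_i$ (working in context $\Gamma, k : T_a \rightarrow T_c$, which needs weakening of $\Gamma \p V : T_p$), then substituting the continuation value for $k$ — to obtain $\Gamma \p C_i\sub{x}{V}\sub{k}{\la x.\,\key{with} H ~\key{handle} C} : T_c$, which is the goal. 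The one genuinely delicate point is this bookkeeping: getting the order of the two substitutions right and keeping the handler's bound $x$ distinct (by $\alpha$-renaming) from the operation's bound $x$; everything else is mechanical.
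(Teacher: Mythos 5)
Your proposal is correct and follows essentially the same route as the paper's own proof: weakening and value-substitution lemmas proved by induction on typing derivations, followed by induction on the reduction relation, with weakening used for the commuting cases and substitution for the redex cases. The extra detail you supply for the handler/operation step (and the remark on $\alpha$-renaming and the order of the two substitutions) is a faithful elaboration of what the paper leaves implicit.
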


\begin{proof}
	The proof is standard. We start by proving a weakening lemma (if $\Gamma \p C : T$ then $\Gamma, x : T' \p C : T$) and a value substitution lemma (if $\Gamma, x:T \p C : T'$ and $\Gamma \p V : T$ then $\Gamma \p C \sub{x}{V} : T' $) by induction on judgment, and then we can prove subject reduction by induction on the relation $C \red C'$. The weakening lemma is useful for the cases when an algebraic operation $\op{op}$ commute with a $\key{let}$or an $\mathbf{handle}$, and the substitution lemma is useful for substitutions, which are always for values as one can see in Figure~\ref{f:effsemantics}. 
\end{proof}

However, with those simple types it is not possible to prove that any typable computation reduces to a computation of the shape $\key{return} V$, because a non handled operation $\op{op}$ cannot be reduced. To have those kind of safety theorems, we need type effects, as we will see in the next section. 

\subsection{Setting the Stage: Types and Algebraic Operations for RL}

Let us now instantiate more clearly the set of base types and operations for our approach. We consider that: 

\begin{itemize}
	\item Base types should contain at least 
	$\type{Bool},\R,\Ae,\Oe,\Arl,\Orl$, respectively the types for booleans, 
	real number for rewards, \emph{actions}, \emph{observations} and their 
	abstract counterparts.
	\item Constants for booleans and real numbers are standard. For $\Ae, \Arl$ and $\Orl$, we consider finite sets of constants, and $\Ae, \Arl$ should have the same size. As for $\Oe$, the set of constants depends on the environment so we have no fixed choice. 
	\item We have standard functions for booleans and real numbers, and equality for all those base types. We may also have additional functions for $\Oe$ depending on the environment. 
	\item For the set of operations symbols, we suppose that we have at least the main four operations described before
	\begin{align*}
		\op{choice} &: \Unit \ararrow \Ae \\
		\op{reward} &: \R \ararrow \Unit \\ 	
		\op{observe} &: \Unit \ararrow \Oe \\
		\op{do} &: \Ae \ararrow \Unit		
	\end{align*}
	And, in order to make the abstraction more formal, we also add 
	\begin{align*}
		\op{choice_{\RL}} &: \Unit \ararrow \Arl \\
		\op{reward_{\RL}} &: \R \ararrow \Unit 
	\end{align*}
	Moreover, we also need to add effects corresponding to the abstract interface. As this interface is not fixed, we chose for the sake of the example the one we used on the naive algorithm. So we add those operations: 
	\begin{align*}
		\op{observe_{\RL}} &: \Unit \ararrow \Orl \\
		\op{getactions_{\RL}} &: \Orl \ararrow \Arl~\type{List} 
	\end{align*}
\end{itemize}

This describes all we need in order to formalize our approach. However, in practice, we do not want all the actors (environment, user and learner) to have access to all those operations at all time. A typical example is that the learner must not use $\op{observe}$ or have access to $\Oe$ as it would break abstraction, and thus modularity. As a first approach, we define subsets of this language, by defining subsets of base types, functions and algebraic operation, so that we can define clearly which actor has access to which constructors. In the next section on safety, we will formalize this with a type system. 

\begin{itemize}
	\item The whole language described above, with all base types and operations is denoted $\la_{\eff}^I$. This calculus will be used to define the interface, as an interface should be able to see both the constructors for the  environment and for the learner in order to make the bridge. 
	\item We denote by $\la_{\eff}^E$ the subset of this whole language without types and operations related to the learner (all types and operations with $\RL$ in the name, such as $\Arl$, $\op{choice}_{\RL}$, \dots). This language will be used for the main program and the handler for $\op{do}$ and $\op{observe}$, as it is basically the language with no concrete information about the learner.  
	\item Dually, we denote by $\la_{\eff}^{\RL}$ the language with \emph{only} the operations related to the learner. Also, in $\la_{\eff}^{\RL}$, we consider that we do not have any functions nor constants for $\Orl$ and $\Arl$ except equality and the operations in the abstract interface. Indeed, this language will correspond to the learner, and as explained before, we want the learner to be independent from its environment. An important point to make this possible, is that the learner should be modular (or ideally, polymorphic) in the types $\Orl$ and $\Arl$, and so having access to the constants of those types would break this principle. In particular, with this definition of $\la_{\eff}^{\RL}$, changing the size of the finite sets $\Orl$ and $\Arl$ does not modify the language, so whatever the size may be, the learner still uses the same language. 
\end{itemize} 

\subsection{RL Algorithms and Environments as Handlers}

We start with the main program. As showed in the previous section, the main program is just an usual functional program that has access to the types $\Oe$ and $\Ae$ as well as the four main operations $\op{choice}$, $\op{reward}$, $\op{observe}$ and $\op{do}$. Thus, it corresponds to the language we denote $\la_{\eff}^E$. This is the main focus of our work, to make it possible to program within this language. In order to do this we still need to handle those four operations, and for this we design several handlers. 

\subsubsection{The Learner}
The learner is a handler with the state monad for the operations $\op{choice}_{\RL}$ and $\op{reward}_{\RL}$, written in $\la_{\eff}^{\RL}$. So, formally, the learner is a handler 
$$H_{\RL} \equiv \key{handler} \{ \key{return} x \mapsto \la m. \key{return} (m,x), \op{choice}_{\RL}(k) \mapsto C_c, \op{reward}_{\RL} (r;k) \mapsto C_r \}$$
such that 
$$ k : \Arl \rightarrow S_M(T) \p C_c : S_M(T) \qquad r : \R, k : \Unit \rightarrow S_M(T) \p C_r : S_M(T) $$
where $T$ is any type and $S_M(T) \equiv M \rightarrow (M \times T)$ is the type for the state monad with state $M$, that represents the \emph{memory} of the learner. With this, we obtain a handler $H_{\RL}$ with type $T \Rightarrow S_M(T)$ for any $T$.  

In this handler, we can indeed encode a RL algorithm, as we did for the naive algorithm for MAB, because, informally: 
\begin{itemize}
	\item The memory $M$ can contain a value function, associating an heuristic to pairs of states and actions ($\Orl \times \Arl$). It can also be used to log information if needed, typically the previous choice, the number of times choice was called \dots 
	\item The term for choice $C_c$ has a link with the \emph{policy} of the learner. Indeed, a policy can be seen as a function of type $M \rightarrow \Arl \times M$, where, from an internal memory of the learner, we chose an action in $\Arl$ and we can also modify the memory if needed, for example to log this choice. With this policy, it is easy to obtain a computation $C_c$ with the type described above by composing with the continuation $k$. 
	\item The term for $C_r$ has a link with the update of the value function after a choice. Indeed, learner usually modify their value function after a choice (or a sequence of choices). This can be seen as an update function of type $\R \times M \rightarrow M$, when we modify the memory (and thus the value function) according to the reward. Here, using a log in the memory can come in handy for most reinforcement learning algorithm to remember which choice is being rewarded. It is then easy to see that, from this update function, we can obtain a computation $C_r$ with the type described above by composing with the continuation $k$. 
\end{itemize}

However, the learner will also need additional information from the environment, typically the current observable state or the list of available actions for this state, that is why we can also use the operations of the abstract interface in the computations $C_c$ and $C_r$. 

\subsubsection{Hiding the Memory of the Learner}

The concept of this handler for the learner, which is typically a handler with the state monad, is standard but it is not very practical. Indeed, in the type $T \Rightarrow S_M(T)$ we can see that the handled computation needs an initial memory, and the final memory is returned at the end of the computation. However, this handled computation should be done in the main program, and the user cannot provide an initial memory since it is not supposed to know the actual type of memory. Similarly, there is no reason for the user to have direct access to the memory of the learner, so this type $M$ should be hidden in the computation type. So, in practice, we want a handler for the learner of type $T \Rightarrow T$ for any $T$. Fortunately, it is possible to do this from the previous handler, and it is a standard way to make the state invisible. Suppose that the learner provides an initial memory $m_i$. Then, it can define the following handler (with the empty set of handled operations):
$$H_{\hide} \equiv \key{handler} \set{\key{return} f \mapsto \key{let} x = f~m_i ~\key{in} \pi_2(x)} $$
with type $S_M(T) \Rightarrow T$ for any $T$. So, by composing this handler with the previous one, we obtain a handler of type $T \Rightarrow T$, and the memory becomes totally hidden from the user. This construction is a way to mimic the $\key{finally}$clause of the \EFF\ language, that we used in Section~\ref{s:OurApproach}, using the standard syntax of handler.

\subsubsection{The Interface}
As the previous handler uses additional algebraic operations (the one from the interface), we need to handle them. Also, the previous handler is for the operations $\op{choice}_{\RL}$ and $\op{reward}_{\RL}$ and we need to make the bridge between those operations and the one for the main program: $\op{choice}$ and $\op{reward}$. We do this by defining two handlers in $\la_{\eff}^I$.

The first handler is a simple one, mainly abstracting the set of actions. For this, we need to define a bijection $f : \Arl \rightarrow \Ae$, which is easy to do as they are both finite sets with the same size. Then, the handler for abstracting actions is given by: 
\begin{align*}
	H_{\Act} \equiv \key{handler} \{ \key{return} x &\mapsto \key{return} x, \\
	\op{choice}(k) &\mapsto \op{choice}_{\RL}(();x. k~(f~x)) \\
	\op{reward}(r;k) &\mapsto \op{reward}_{\RL}(r;x. k~x)  \}
\end{align*}
With this handler $H_{\Act}$, with type $T \Rightarrow T$ for any $T$, we can go from the operations from the main program to the operation for the learner. And now the only thing to do in order to successfully use the handler defined by the learner is to define the handler for abstract interface. With the interface we defined for this example, the two operations $\op{observe_{\RL}}$ and
$\op{getactions_{\RL}}$, then the interface handler would look like: 
$$H_{I} \equiv \key{handler} \{ \key{return} x \mapsto \key{return} x, \op{observe}_{\RL}(k) \mapsto C_o, \op{getactions}_{\RL} (o;k) \mapsto C_a \}$$
with 
$$ k : \Orl \rightarrow T \p C_o : T \qquad o : \Orl, k : (\Arl~\type{List}) \rightarrow T \p C_a : T$$
so that the handler $H_I$ has type $T \Rightarrow T$ for any $T$. 
Those computations may use the $\op{observe}$ operation, and so with this handler, that can depend on the environment, we can handle the computations coming from the handler for the learner $H_{\RL}$. Now, only two operations remain to be handled $\op{do}$ and $\op{observe}$

\subsubsection{The Environment}
To handle the environment, we only need the types and functions of $\la_{\eff}^E$, without $\op{reward}$ and $\op{choice}$. The handler for the environment should have the shape:
$$H_{E} \equiv \key{handler} \{ \key{return} x \mapsto C_r, \op{observe}(k) \mapsto C_o, \op{do} (a;k) \mapsto C_a \}$$
such that this handler is typable with type $T \Rightarrow F(T)$ for any type $T$ where $F(T)$ a transformation of $T$. The actual computations for this handler depend strongly on the environment and so we cannot give additional information for the general case. However, in order to illustrate this handler, we show how to implement it in the case where we have a specific model of the environment.

Suppose that we can model the environment by a type $E$ and two functions: $Next_E : \Ae \times E \rightarrow E$ and  $Observe_E : E \rightarrow O$. This may seem ad-hoc but it is in fact close to a Markov Decision Process which is a common model for the environment in RL algorithm \cite{Sutton2018:ReinforcementLearning}. Indeed, in this case the $Observe_E$ function corresponds to observing a reward and the current state of the Markov Decision Process, and the $Next_E$ function corresponds to moving in the MDP after an action in $\Ae$. With those functions, we can define the following handler for the environment:
\begin{align*}
	H_{E} \equiv \key{handler} \{ \key{return} x &\mapsto \la e. \key{return} (e,x), \\
	\op{observe}(k) &\mapsto \la e. k~(Observe_E~e)~e   \\
	\op{do}(a;k) &\mapsto k~()~(Next_E~(a,e))  \}
\end{align*}
with type $T \Rightarrow S_E(T)$ for any $T$. As we saw with the learner, it is possible in this case to hide the type $E$ for the main program. This is what we did for example in the description of the MAB environment in Section~\ref{s:OurApproach}. 
\subsubsection{The Main Program with Handlers}

And now, we can interpret all the four main operations for the main program. Thus, given a computation $C$ in $\la_{\eff}^E$, we can handle all those operations with the computation: 
$$\key{with} H_E ~\key{handle} (\key{with} H_I ~\key{handle} (\key{with} H_{\RL} ~\key{handle} (\key{with} H_{\Act} ~\key{handle} C)))$$
With this composition, that we could see in the main program of Section~\ref{s:OurApproach}, we obtain a handler of type $T \Rightarrow F(T)$ if the learner hides its memory as explained before. And, with a complete model of the environment, by hiding we can then obtain a type $T \Rightarrow T$. 

\section{Type Safety}
In the previous section, we have introduced a core simply-typed calculus and used it 
to implement our running example. The choice of the language 
was driven by a simple goal: highlighting in the simplest way 
the \emph{essential} features needed to implement our functional approach 
to reinforcement learning systems. 
The price we have to pay for such a simplicity is the lack of 
several desirable guarantees on program behavior. In fact, functional 
programming languages usually come endowed with expressive type systems ---
such as 
polymorphic~\cite{Reynolds/Logical-relations/1983}, 
linear~\cite{DBLP:conf/ifip2/Wadler90,DBLP:conf/lics/BentonW96,DBLP:conf/esop/GhicaS14}, 
and graded~\cite{Orchard:2019:QPR:3352468.3341714} type systems --- 
ensuring the validity of nontrivial program properties at static time. 
Due to its simple type discipline, our calculus offers only but a few guarantees 
on program correctness, especially if one takes into account that the simplicity of 
the type system is not reflected at the operational level, 
which is instead characterised by highly expressive constructs, such as algebraic 
operations and handlers.

In light of that, it is desirable to strengthen the power of the type 
system defined in the previous section. Clearly, there are several possible 
extensions we may look for, and it thus natural to ask what path we should choose. 
In this section, we first identify three program properties well-known in the field
functional programming  that we believe to be particularly relevant when modelling 
reinforcement learning systems functionally, and then outline how such properties 
could ensured by way of expressive type systems. Let us begin with the 
target program properties. 

\begin{enumerate}
\item \textbf{Locality of Operations}.
  As a first property, we would like to ensure (families of) algebraic operations 
  to be used only in 
  \emph{specific parts of programs}. In MAB, for instance, we would like 
  the code describing the environment, i.e. the slot machines, not to be able to 
  perform the $\choice$ operation. 
\item \textbf{Polymorphism}.
  Secondly, we would like to ensure code to be as modular as possible, 
  this way enhancing its (re)usability. Concretely, that means having 
  some form of \emph{polymorphism} at disposal. This way, 
  algorithms may be written just once, the same code being called in many 
  different environments, possibly within the same program. In other words, the 
  learner should use the types $\Orl$ and $\Arl$ in a restricted way, the only 
  relevant information about them being that they are \emph{finite} types, whose values 
  can thus be enumerated. As an example, the naive algorithm we have presented in 
  Section~\ref{s:OurApproach} could be used in any context. In the particular case of MAB, 
  we have defined $\Orl$ as the unit type $\Unit$. However, we could very well replaced 
  $\Unit$ with any other \emph{finite} type, meaning that our strategy 
  scales to environments with more information. 
\item \textbf{Linearity and Order}.
  Finally, we would like force algebraic operations to be 
  performed \emph{in a specified order}, this way ensuring the learner to 
  have all the information it needs. Let us clarify this point with an example. 
  In the main program presented in 
  Section~\ref{s:OurApproach}, we see that the flow of information follows a certain logic: for each iteration, we make a choice; we perform such a choice in the environment; 
  we observe its results; and, finally, we give the reward. Such a logic is reflected 
  in a specific execution order of algebraic operation, and breaking such an order 
  may lead the learner to obtain false information. 
\end{enumerate}
 
Having isolated our target properties, we spend the rest of this section outlining 
some possible ways to force such properties by means of type systems. 
In particular, we focus on the use of type and effect 
systems~\cite{DBLP:conf/birthday/NielsonN99}, 
polymorphism~\cite{Reynolds/Logical-relations/1983}, and 
graded modal types~\cite{Orchard:2019:QPR:3352468.3341714}.

\subsection{Algebraic Operations and Effect Typing}

Type and effect systems~\cite{DBLP:conf/birthday/NielsonN99} 
endowed traditional type systems with annotations giving information 
on what effects are produced during program execution. 
For our purposes, we can build upon well-known effect typing systems keeping 
track of which operations are handled during a computation~\cite{KammarICFP2013:Handlers}. 
Extending the simple type system of Section~\ref{s:Generalization} in this way, 
we can then ensure well-typed programs to handle all their algebraic operations.   

We now define effect typing for our core calculus and 
show how to take advantage of such a typing in the context of a 
reinforcement learning problem. 
We define an effect signature as a collection of operations and 
annotate the type of handlers with those signatures. Formally, leave the 
the syntax of terms as in Figure~\ref{f:effgrammar}, but we replace 
the one of types as follows: 
\begin{align*}
	T &::= \B \midd \Unit \midd T \times T \midd T \rightarrow^E T \midd T \effarrow{E}{E} T \\
	E &::= \set{\op{op}: T \ararrow T} \sqcup E \midd \emptyset
\end{align*}

Typing judgments for computations now are of the form $\Gamma 
\p_E C : T$, with the informal reading that that $C$ has type $T$ in a context where the 
only \emph{unhandled} algebraic operations $C$ can possibly perform 
are included in $E$. Typing judgments for values, instead, 
remain the same (i.e. $\Gamma \p V : T$), as values not being executed, 
they cannot perform any algebraic operation at all.
The inference system for our new typing judgment is given in Figure~\ref{f:efftypeeffects}. 

\begin{figure}
	\begin{framed}
		\begin{center}
			\AXC{}
			\UIC{$\Gamma, x : T \p x : T$}
			\DP 
			\qquad 
			\AXC{$\Gamma, x : T_1 \p_E C : T_2$}
			\UIC{$\Gamma \p \la x. C : T_1 \rightarrow^E T_2$}
			\DP 
			\qquad 
			\AXC{$\Gamma \p V_1 : T_1$}
			\AXC{$\Gamma \p V_2 : T_2$}
			\BIC{$\Gamma \p (V_1,V_2) : T_1 \times T_2$}
			\DP
			\\ 
			\vvspace
			\AXC{}
			\UIC{$\Gamma \p () : \Unit$}
			\DP
			\qquad
			\AXC{$c : \B \in \mathcal{C}$}
			\UIC{$\Gamma \p \underline{c} : \B$}
			\DP
			\qquad
			\AXC{$f : \B_1 \times \cdots \B_n \rightarrow \B \in \mathcal{F}$}
			\UIC{$\Gamma \p \underline{f} : \B_1 \times \cdots \B_n \rightarrow \B$}
			\DP
			\qquad 
			\\ 
			\vvspace
			\AXC{$E_1 = \set{\op{op}_i : T^p_i \ararrow T^a_i \mid 1 \le i \le n} \sqcup E_f \qquad E_2 = E_2' \sqcup E_f$}
			\noLine
			\UIC{$(\Gamma, x : T_i^p, k : T_i^a \rightarrow^{E_2} T_c \p_{E_2} C_i : T_c)_{1 \le i \le n} \qquad \Gamma, x : T_v \p_{E_2} C_r : T_c $}
			\UIC{$\Gamma \p \key{handler} \{ \key{return} x \mapsto C_r, \op{op}_1(x;k) \mapsto C_1, \dots, \op{op}_n(x;k) \mapsto C_n \} : T_v \effarrow{E_1}{E_2} T_c $}
			\DP
			\\ 
			\vvspace
			\AXC{$\Gamma \p V : T$}
			\UIC{$\Gamma \p_E \key{return} V : T$}
			\DP 
			\qquad 
			\AXC{$\Gamma \p V : T_1 \times T_2$}
			\UIC{$\Gamma \p_E \pi_i(V) : T_i$}
			\DP 
			\\ 
			\vvspace
			\AXC{$(\op{op} : T_p \ararrow T_a) \in E$}
			\AXC{$\Gamma \p V : T_p$}
			\AXC{$\Gamma, x : T_a \p_E C : T $}
			\TIC{$\Gamma \p_E \op{op}(V;x.C) : T$}
			\DP 
			\\ 
			\vvspace 
			\AXC{$\Gamma \p_E C_1 : T_1 $}
			\AXC{$\Gamma, x : T_1 \p_E C_2 : T_2 $}
			\BIC{$\Gamma \p_E \key{let} x = C_1 ~\key{in} C_2 : T_2$}
			\DP 
			\qquad 
			\AXC{$\Gamma \p V_1 : T_2 \rightarrow^E T_1$}
			\AXC{$\Gamma \p V_2 : T_2$}
			\BIC{$\Gamma \p_E V_1~V_2 : T_1$}
			\DP
			\\ 
			\vvspace 
			\AXC{$\Gamma \p V : T_2 \effarrow{E_1}{E_2} T_1$}
			\AXC{$\Gamma \p_{E_1} C : T_2$}
			\BIC{$\Gamma \p_{E_2} \key{with} V ~\key{handle} C : T_1$}
			\DP
		\end{center}
	\end{framed}
	\caption{Typing Rules with Type Effects}
	\label{f:efftypeeffects}
\end{figure}

The most important rule in Figure~\ref{f:efftypeeffects} is the one for handlers 
(cf. the typing rule for open handlers by \citet{KammarICFP2013:Handlers}): it
states that if a handler can handle a subset of the available operations 
(denoted by $\set{\op{op}_i : T^p_i \ararrow T^a_i \mid 1 \le i \le n}$ 
in Figure~\ref{f:efftypeeffects}) 
and introduce some new operations in their computations ($E_2'$), 
then the final set of free operations contains the unhandled ones in the original set ($E_f$) 
together with the new operations introduced by the handler. 

This new type system satisfies better safety properties than the previous one. In particular, it enjoys preservation (as the type system of previous section) 
and progress. 

\begin{theorem} 
Progress and preservation hold for the type system in Figure~\ref{f:efftypeeffects}. 
That is:
	\begin{enumerate}
		\item \emph{Preservation.}
			If $\Gamma \p_E C : T$ and $C \red C'$ then $\Gamma \p_E C' : T$.
		\item \emph{Progress.}
			Suppose that all functions for base types are total for the constants in 
			the language. If $\p_E C : T$, then either there exists $C'$ such that $C 
			\red C'$, or $C$ has the shape $\key{return} V$ for some $V$, or $C$ has 
			the shape $\op{op}(V,x.C')$ with $\op{op} \in E$.   
	\end{enumerate}
\end{theorem}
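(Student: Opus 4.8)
The plan is to run the standard syntactic argument for progress and preservation, extending the proof sketched for Theorem~1 so that the effect annotations are threaded through. First I would prove the two structural lemmas by a routine mutual induction on typing derivations: a \emph{weakening} lemma (if $\Gamma \p V : T$ then $\Gamma, x : T' \p V : T$, and if $\Gamma \p_E C : T$ then $\Gamma, x : T' \p_E C : T$ for $x$ fresh) and a \emph{value substitution} lemma (if $\Gamma, x : T \p_E C : T'$ and $\Gamma \p V : T$ then $\Gamma \p_E C\sub{x}{V} : T'$, and similarly for values). I would also record the evident \emph{inversion} lemmas --- reading off, from a typing judgement, the premises of the unique rule that could have produced it, and using that an effect signature $E$ assigns at most one parameter/arity type to each operation symbol --- together with a \emph{canonical forms} lemma: a closed value of type $\B$ is some $\underline{c}$, of a product type is a pair, of an arrow type is a $\la$-abstraction or some $\underline{f}$, and of a handler type is a handler $H$.

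For \emph{preservation} I would induct on the derivation of $C \red C'$ and invert the typing of $C$ in each case. The $\beta$-step, the projection step, the $\underline{f}$-application step, the $\key{let} x = \key{return} V ~\key{in} C_2$ step, and the return-clause step $\key{with} H ~\key{handle} \key{return} V \red C_r\sub{x}{V}$ all follow from the substitution lemma after using canonical forms to expose the relevant value shapes; the two congruence steps (for $\key{let}$ and for $\key{with}$) follow immediately from the induction hypothesis. The two commuting steps --- pushing $\op{op}(V;y.C_1)$ out of a $\key{let}$, and out of a $\key{with} H ~\key{handle} (-)$ where $\op{op}$ is not handled by $H$ --- are the first place where the effect bookkeeping bites: by inversion $\op{op} \in E$ in the first case, while in the second $\op{op}$ belongs to the pass-through part $E_f$ of the handler's input signature $E_1 = \set{\op{op}_i : T^p_i \ararrow T^a_i} \sqcup E_f$, hence to $E_2 = E_2' \sqcup E_f$, which is exactly the effect at which the commuted operation must be re-typed; in both cases weakening is used to push the continuation (resp. the handler $H$, which is a value) under the binder $y$ (resp. $x$) that now scopes over it.

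I expect the genuine obstacle to be the handler step
\[
\key{with} H ~\key{handle} \op{op}_i(V;x.C) \red C_i \sub{x}{V} \sub{k}{\la x. \key{with} H ~\key{handle} C}.
\]
Here I would invert $\key{with}$ to obtain $\p H : T_v \effarrow{E_1}{E_2} T_c$ and $\p_{E_1} \op{op}_i(V;x.C) : T_v$, then invert the handler rule and the operation rule; matching the type that $E_1$ records for $\op{op}_i$ against the one used to type the operation (this uses uniqueness of types in a signature) yields $\p V : T_i^p$ and $x : T_i^a \p_{E_1} C : T_v$, while the handler rule supplies $x : T_i^p, k : T_i^a \rightarrow^{E_2} T_c \p_{E_2} C_i : T_c$. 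One then checks, using weakening together with the $\key{with}$- and abstraction-rules, that $\la x. \key{with} H ~\key{handle} C$ has type $T_i^a \rightarrow^{E_2} T_c$ --- the continuation is re-handled by the \emph{same} handler $H$, so it ends at effect $E_2$, precisely the type the handler rule demands of $k$ --- and finally applies the value substitution lemma twice, for $k$ and then for $x$ (the two bound variables being distinct up to $\alpha$-renaming), to conclude $\p_{E_2} C_i\sub{x}{V}\sub{k}{\la x. \key{with} H ~\key{handle} C} : T_c$.

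For \emph{progress} I would induct on the typing derivation of $\p_E C : T$ in the empty context. If the last rule is the one for $\key{return}$ or for an operation, $C$ is already in one of the two admissible stuck forms --- for an operation, inversion yields exactly $\op{op} \in E$. For $\pi_i(V)$, $V~V$ and $\key{with} V ~\key{handle} C$, canonical forms identifies $V$ as a pair, as a $\la$-abstraction or $\underline{f}$, and as a handler respectively; the $\la$- and handler-sub-cases produce a redex at once, and for $\underline{f}~V$ the argument $V$, having a product of base types as type, is by iterated canonical forms a tuple of constants, so the totality hypothesis on base-type functions yields the result constant and hence a reduction. For $\key{let} x = C_1 ~\key{in} C_2$ and $\key{with} H ~\key{handle} C$, I apply the induction hypothesis to the sub-computation (typed at $E$, resp. at the handler's input signature $E_1$): if it steps, so does the whole term; if it is $\key{return} V$, the $\key{let}$ fires (resp. the return clause of $H$); and if it is $\op{op}(V;y.C')$ with $\op{op}$ in the relevant effect set, then for $\key{let}$ the commuting rule applies, while for $\key{with}$ either $\op{op}$ is one of the handled $\op{op}_i$ (so the handler clause fires) or $\op{op} \in E_f$ (so the commuting rule fires) --- in every case the term reduces, which completes the case analysis.
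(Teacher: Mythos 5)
Your proposal is correct and follows essentially the same route as the paper: preservation via weakening and value-substitution lemmas with an induction on the reduction relation (as in the paper's Theorem 1, now with effect annotations threaded through), and progress by induction with the four-way case split on $\key{with} H ~\key{handle} C$ (step, unhandled operation in $E_f$, return, handled operation) that the paper singles out as the key case. You simply supply more detail than the paper's sketch, in particular the careful typing of the substituted continuation $\la x. \key{with} H ~\key{handle} C$ at type $T_i^a \rightarrow^{E_2} T_c$, which is indeed the crux of the preservation argument.
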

\begin{proof}
	We prove progress by induction on $C$. The use of typed effects is important to keep track of what are exactly the set of operations that can appear, and the commutation of $\op{op}$ with $\key{let}$and $\key{handler}$is also essential in this proof. The hypothesis on base type functions ensures that an application of a base type function can always be reduced. The case of $\key{with} H ~\key{handle} C$ is the most interesting one, by induction hypothesis and typing, there are four cases for the computation $C$, either it can be reduced to $C'$, and then we use the first rule for handlers of Figure~\ref{f:effsemantics}, either it is an operation $\op{op} \in E_f$ and we use the second rule, either it is a $\key{return} V$ and we use the third rule, or it is an operation $\op{op} \in E_1/E_f$ and we use the fourth rule. 
\end{proof}

\begin{remark}
Even if we have focused on progress and preservation, it is worth mentioning that 
that termination of the 
reduction relation can be proved using standard methods~\cite{KammarICFP2013:Handlers}.
\end{remark}

We now have all the ingredients needed to encode the informal policies 
we could build the MAB example upon. Let us define the following sets of 
effects:
\begin{align*}
	E_E &= \set{\op{observe}: \Unit \ararrow \Oe ~;~ \op{do} : \Ae \ararrow \Unit} \\ 
	E_{\RL} &= \set{\op{choice}: \Unit \ararrow \Ae ~;~ \op{reward} : \R \ararrow \Unit} \\
	E^{\Abs}_{\RL} &= \set{\op{choice}_{\RL}: \Unit \ararrow \Arl ~;~ \op{reward}_{\RL} : \R \ararrow \Unit} \\
	E^{\Abs}_I &= \set{\op{observe}_{\RL}: \Unit \ararrow \Orl ~;~ \op{getactions} : \Orl \ararrow \Arl~\type{List} ~;~ \cdots} 
\end{align*}
These are, respectively, the set of effects for the environment, the learner seen by the environment, the learner seen by the learner, and the interface used by the learner. 

\paragraph{The Learner}
The handler $H_{\RL}$ for the learner should have the type 
$$ \p H_{\RL}  : T \effarrow{E_{\RL}^{\Abs}}{E_I^{\Abs}} S_M(T), $$
where $T$ is any type and $S_M(T) \equiv M \rightarrow M \times T$. This means 
that the learner handles the operations $\op{choice}_{\RL}$ and 
$\op{reward}_{\RL}$ using only the operations in the interface $E^{\Abs}_I$. 
Thus, with this type, we ensure that the learner does not have a direct access 
to the environment, and it only sees effects for the abstract types $\Arl$ and 
$\Orl$ and not the concrete types $\Oe$ and $\Ae$. Then, with hiding, we can 
obtain a handler of type $T \effarrow{E_{\RL}^{\Abs}}{E_I^{\Abs}} T$ as long as 
the learner provides an initial memory. Notice that in this hiding handler of 
type $T \effarrow{E_I^{\Abs}}{E_I^{\Abs}} T$, the initial memory can depend on 
the abstract interface, this being useful in those cases in which some 
parameters in the initial memory have to be exposed.

\paragraph{The Interface}
The interface is separated into two handlers: $H_{\Act}$ and $H_I$. This first 
handler is only a bridge between abstract actions and concrete actions. It is 
not difficult to see that the former should have the type:
$$ \p H_{\Act} : T \effarrow{E_{\RL}}{E_{\RL}^{\Abs}} T, $$
for any type $T$, thus handling the operations $\op{choice}$ and $\op{reward}$ seen from the exterior to the internal one of the learner. As for the interface, it should be a bridge between the environment and learner abstracted by the user, so it should have the type: 
$$ \p H_{I} : T \effarrow{E_I^{\Abs} \sqcup E_E}{E_E} T,$$ 
where the fixed set of operations ($E_f$ in the rule) would be $E_E$. 

\paragraph{The Environment}
As for the environment, the handler only has to handle the two operations $\op{observe}$ and $\op{do}$, without using the learner nor the interface. So, it should have the type:
$$\p H_E : T \effarrow{E_E}{\emptyset} F(T).$$

\subsubsection{Composition of Handlers and Main Program}
Now that we have types for each of the four handlers needed, we compose them. An important point of our effect typing is that it supports weakening: in the handler rule, we can always take a larger set $E_f$, so that we are licensed to use the handler in different contexts.\footnote{This is one of the main reasons why this rule is very useful to achieve  polymorphism.} In our case, it means that we can give the following types to handlers: 
\begin{align*}
	\p H_{\RL} &: T \effarrow{E_{\RL}^{\Abs} \sqcup E_E}{E_I^{\Abs} \sqcup E_E} T \\
	\p H_{\Act} &: T \effarrow{E_{\RL} \sqcup E_E}{E_{\RL}^{\Abs} \sqcup E_E} T \\
	\p H_I &: T \effarrow{E_I^{\Abs} \sqcup E_E}{E_E} T \\
	\p H_E &: T \effarrow{E_E}{\emptyset} F(T).
\end{align*}
Consequently, we can compose them to obtain: 
$$ H_E \circ H_I \circ H_{\RL} \circ H_{\Act} : T \effarrow{E_{\RL} \sqcup E_E}{\emptyset} F(T).$$
When we have a model of the environment, with hiding, we have 
$$ H_E \circ H_I \circ H_{\RL} \circ H_{\Act} : T \effarrow{E_{\RL} \sqcup E_E}{\emptyset} T. $$
This shows that the main program can use the four main effects, as expected. Without hiding, we would have a type which is essentially the composition of the two state monads on $E$ and $M$, meaning that the program should be understood in a context with both an environment and a memory for the learner. 

\subsection{Polymorphism}
The typing rule for handlers allows us to assign handlers arbitrary types $T$. 
To ensure such types not to be inspected by programs, and to ensure uniqueness of
 typing derivation of handlers, it is natural to extend our type system with polymorphic types. Moreover, the base types we presented for the language, $\Arl, \Orl, \Ae$ and $\Oe$ all depend on the actual environment, whereas we expect the handler $H_{\RL}$ for the learner not to do so. Consequently, we may rely on polymorphism to enforce  
 $\Arl$ and $\Orl$ not to be inspected, this way giving a unique polymorphic type to the handler $H_{\RL}$ and thus ensuring the latter to be usable in any environment.  

Languages, however, can be polymorphic in may ways: especially in presence of effects.
We now outline what kind of polymorphism is needed for our purposes. 
As a first requirement, we need to be able to declare polymorphic handlers; write their implementation once and for all; and then use them in different contexts. To do that, 
we believe having handlers as first-class citizen of the language, as we presented in Figure~\ref{f:effgrammar}, is necessary. Secondly, we also want to declare polymorphic effects, 
so to give a sense to those polymorphic handlers. 

We take as an inspiration the language by \citet{Biernackietal2019:AbstractingEffects}. However, since the aforementioned language does not consider handlers as first-class citizens, we cannot directly rely on that. 
In the rest of this section, we informally outline a possible polymorphic type assignment 
for our language, leaving its formal definition and analysis (such as type soundness) 
as future work.  

We consider different kinds, starting from the base kinds of
types, effects, and rows, and building complex kinds using a functional arrow.
Formally, kinds are generated by the following grammar:
$$\kappa:= T \midd E \midd R \midd \kappa \rightarrow \kappa.$$
 A row $\rho$ is a sequence of effects $\langle E_1 \mid \rho' \rangle$, with intuitively the same meaning as effect signature in our last type system, the main difference being that we can have polymorphism on row, and an effect can appear several time in a row (but possibly with different instantiation of their polymorphic types). 
We would like to write, as in \cite{Biernackietal2019:AbstractingEffects}, the following computation: 
$$\key{effect} E_{\RL}^{\Abs} = \forall \alpha_A :: T. \set{\op{choice}_{\RL} : \Unit \ararrow \alpha_A ; \op{reward}_{\RL} : \R \ararrow \Unit} ~\key{in} \cdots $$    
declaring a new effect, called $E_{\RL}^{\Abs}$, waiting for a type $\alpha_A$ representing the type (of kind $T$) of actions, and then declare the types of those two algebraic operations. The kind of $E_{\RL}^{\Abs}$ would then be $T \rightarrow E$, so given a type of kind $T$ this indeed gives an effect of kind $E$. Similarly, the abstract interface would be declared with:  
$$\key{effect} E_{I}^{\Abs} = \forall \alpha_A :: T, \alpha_O :: T. \set{\op{observe}_{\RL} : \Unit \ararrow \alpha_O ; \op{getactions}_{\RL} : \alpha_O \ararrow \alpha_A~\type{List}; \cdots} $$  
And then, in this context, the type of the handler for the learner, $H_{\RL}$ would be, after hiding:
$$ H_{\RL} : \forall \alpha_A :: T, \alpha_O :: T, \alpha_T :: T, \rho :: R. \alpha_T \effarrow{\langle E_{\RL}^{\Abs}~\alpha_A \mid \rho \rangle }{\langle E_{I}^{\Abs}~\alpha_A~\alpha_O \mid \rho \rangle} \alpha_T $$
meaning that for any type of actions $\alpha_A$, any type of observations $\alpha_O$, any type of computation $\alpha_T$ and any effect environment $\rho$, the handler for the learner has an identity type on the computation, handling the operations in $E_{\RL}^{\Abs}~\alpha_A$ and introducing the new operations of $E_I^{\Abs}~\alpha_A~\alpha_O$. In practice, we would like $\alpha_A$ and $\alpha_O$ to always represent base types, and especially finite sets, so we would need some more restriction on this polymorphism. But informally, this is the shape of type we would like for the handler written by the learner, as it ensures that it can be used in any environment without modifications. Notice that in order to do this, we want handlers as first-class citizens but we do not need the full power of polymorphism since the algebraic operations themselves do not need to be polymorphic. This kind of polymorphism that we need differs slightly from the polymorphism mainly studied in the literature: usually, it allows the programmer to use different handlers for the same operation, however what we want in our case is to use a unique handler for different contexts. 
\subsection{Imposing an Order on Operations}
Our last property of interest focuses on linearity and order of algebraic operations. 
For instance, we would like to ensure that when a call to $\op{choice}$ is made, it is always followed by a $\op{do}$ and a $\op{reward}$ ($\op{observe}$ can be useful to give the reward but it is not mandatory as it should have no effect on the environment). 
We now outline how such a goal can be achieved in presence of algebraic operations 
relying on suitable graded type systems \cite{Kastumata2014:ParametricEffects}. 
Such an approach, however, is not readily extendable to effect handlers. As far as we know, graded types have not been extended to handlers. 

Informally, we define a monoid on algebraic operations and associate each computation 
to an element of this monoid, the latter giving information on 
the operations executed during program evaluation. 
Sequential composition of computations corresponds to monoid multiplication, basic computations to the unit of the monoid, and algebraic operations to the corresponding 
elements of the monoid. A typing rule (for computations) then has shape 
$\Gamma \p_E C : T ; m$, where $m$ is an element of the monoid with, for example, the following rules: 
\begin{center}
	\AXC{$\Gamma \p V : T$}
	\UIC{$\Gamma \p_E \key{return} V : T ; 1$}
	\DP
	\qquad 
	\AXC{$\Gamma \p_E C_1 : T_1 ; m_1$}
	\AXC{$\Gamma, x : T_1 \p_E C_2 : T_2 ; m_2$}
	\BIC{$\Gamma \p_E \key{let} x=C_1 ~\key{in} C_2 : T_2 : m_1 \cdot m_2 $}
	\DP
	\\ 
	\vvspace 
	\AXC{$(\op{op} : T_p \ararrow T_a) \in E$}
	\AXC{$\Gamma \p V : T_p$}
	\AXC{$\Gamma, x : T_a \p_E C : T ; m$}
	\TIC{$\Gamma \p_E \op{op}(V;x.C) : T ; m_{\op{op}} \cdot m$}
	\DP 
\end{center}
where $1$ is the unit of the monoid, $m_1 \cdot m_2$ is monoid multiplication and $m_{\op{op}}$ is the element of the monoid corresponding to the operation $\op{op}$. 

In our case, the monoid would be the free monoid on the three element associated to the algebraic operations $\op{choice}, \op{do}$ and $\op{reward}$, with $\op{observe}$ confounded with the unit of the monoid, with the following equation
\begin{align*}
	&m_\op{choice} \cdot m_\op{do} \cdot m_\op{reward} = 1 
\end{align*}
and then, we would like the main program to be associated to an element of this monoid with the shape $m_\op{do}^n$. With this, we can ensure that, if we forget about $\op{observe}$ since it has no effect at all, then any $\op{choice}$ is always followed by exactly one $\op{do}$ and one $\op{reward}$. Moreover, only the $\op{do}$ operation can be done outside of this loop, because essentially the user can make choices without calling the learner, and for the point of view of the learner then the user would just be a part of the transition function of the environment.  

However, this method, which is standard for effects, does not generalise well in presence of handlers. Indeed, we need to take into account the potentially new operations introduced by a handler, and intuitively the handler would then represent a map from the new operations to an element of the monoid. For a very simple example, consider the operation $\op{choicedoandobserve} : 1 \ararrow \Ae \times \Oe$ with the handler 
$$\key{handler} \set{\key{return} x \mapsto x ; \op{choicedoandobserve}(k) \mapsto \op{choice}(a. \op{do}(a;\_. \op{observe}(o. k~(a,o))))}$$
doing the three operations $\op{choice}$, $\op{do}$ and $\op{observe}$ in sequence. Then, this handler should be typed with the information that the new operation $\op{choicedoandobserve}$ is associated to the element $m_\op{choice} \cdot m_\op{do} \cdot m_\op{observe}$. 

We believe such a type system should be feasible, but we have no concrete formalization of this. So, as for polymorphism, those kind of type system would be desirable for safety but we leave the formalization to future work.

\section{Related Work}
The interaction between programming language theory and machine 
learning is an active and flourishing research area. 
Arguably, the current, most well-established products of such an 
interaction are the so-called Bayesian~\cite{tran2016edward,tran2017deep,
Tolpin/Anglican/2016,DBLP:journals/jmlr/BinghamCJOPKSSH19,JSSv076i01}
and differentiable~\cite{Bergstra10theano,DBLP:conf/osdi/AbadiBCCDDDGIIK16}
programming languages, and their associated 
theories~\cite{DBLP:conf/popl/Goodman13,DBLP:journals/corr/abs-1809-10756,DBLP:journals/pacmpl/AbadiP20,DBLP:journals/toplas/PearlmutterS08,DBLP:journals/lisp/SiskindP08}. 

Choice-based operations are not new in programming language theory, as they 
first appeared in the context of computational effects (more specifically, choice operations has been first studied as nondeterministic  choices~\cite{DBLP:conf/aieeire/000161}). 
However, their integration with reward constructs as a way to 
structure machine learning systems as data-driven, decision-making processes is quite recent. 
As an example of that, the programming language \textsf{SmartChoice}~\cite{Carbune2019:PredictedVariables} integrates choice and rewards operations 
with traditional programming constructs. 
To the best of the authors' knowledge, the first work dealing with 
semantic foundations for choice-based programming languages is the recent 
work by \citet{DBLP:conf/lics/AbadiP21}. There, a modular
approach to higher-order choice-based 
programming languages is developed in terms of 
the selection monad~\cite{DBLP:conf/cie/EscardoO10a,DBLP:conf/csl/EscardoOP11,
DBLP:journals/apal/EscardoO12} and its algebraic 
operations~\cite{PlotkinPower/FOSSACS/01,
PlotkinPower2003:AlgebraicOperations,PlotkinPower/FOSSACS/02}, 
both at the level of operational and denotational 
semantics. As such, that work is the closest to ours. 

The results presented in this paper has been obtained starting from 
an investigation of possible applications 
of the aforementioned work by Abadi and Plotkin to modelling reinforcement 
learning systems. We have explained in Section~\ref{selection-monad} 
why, when dealing with reinforcement learning, moving from the selection 
to the state monad is practically beneficial.
From a theoretical point of view, 
such a shift can be seen as obtained by 
looking at comodels~\cite{DBLP:journals/entcs/PowerS04} of choice operations, 
which directly leads to view reinforcement learning algorithms as ultimately 
defining stateful runners~\cite{DBLP:journals/entcs/Uustalu15,DBLP:conf/esop/AhmanB20} 
for such operations, and thus as generic effects \cite{PlotkinPower2003:AlgebraicOperations} 
for the state monad. Following this path, we have consequently relied on handlers~\cite{PlotkinPretnar/handling-algebraic-effects-2013,
Bauer-Pretnar/Programming-with-algebraic-effects,
Pretnar2015:EFF} to define modular implementations 
of such generic effects, and thus of reinforcement learning algorithms. 
Even if handlers and algebraic operations are well-known tools in 
functional programming, to the best of our knowledge the present 
work is the first one investigating their application to 
modelling reinforcement learning systems.

Finally, we mention that applications of functional programming techniques in 
the context of machine learning similar in spirit to ones investigated in this work, 
as well as in the work by \citet{DBLP:conf/lics/AbadiP21}, 
have been proposed \cite{DBLP:conf/flops/CheungDGMR18} 
in terms of induction and abduction constructs, rather 
than in terms of choices and rewards

\section{Conclusion}
In this article, we address the problem of implementing reinforcement learning 
algorithms within a functional programming language. The starting idea is to 
manage the interaction between the three involved agents (namely the learner, 
the user, and the environment) through a set of algebraic operations and 
handlers for them. This way, a certain degree of modularity is guaranteed.

The path we followed starts from practice, i.e. from the implementation of 
these ideas in a concrete language such as \EFF, and then progressively moves 
towards theory, i.e. towards the study of a paradigmatic language for effects 
and handlers in which these ideas can be formalized, thus becoming an object of 
study. Technically, the most important contribution of this work lies in 
highlighting where the state of the art is deficient with respect to the type 
of type safety and code reuse properties that one would like. 

Some ideas for future works can be sought precisely in the direction just 
mentioned and, in particular, in the definition of systems of types 
sufficiently powerful to guarantee that the algebraic operations involved are 
actually carried out in the good order, or that they allow the right level of 
polymorphism.

An alternative to powerful type systems for some safety properties could also be the use of some advanced features of programming language for abstraction. For example, we presented the abstract interface as a handler, with a learner polymorphic in the types of actions and observation. In \OCAML, an alternative could be to use modules and signatures: an abstract interface would be represented by a signature, with abstract data types for actions and observations, and additional functions to replace the operations. The learner would then have access to elements of this signature, and the user would need to implement an actual module respecting this signature in order to use the learning algorithm. This way, the fact that the learner can be written independently from the environment would come from the \OCAML\ abstraction, instead of polymorphism in a type and effect system. 

\bibliography{Biblio}
\bibliographystyle{abbrvnat}


\end{document}